\theoremstyle{plain}\newtheorem{remark}[thm]{Remark}
\theoremstyle{plain}
\theoremstyle{plain}
\newcommand{\op}{\operatorname{op}}
\newcommand{\da}{\mathord{\downarrow}}
\newcommand{\ua}{\mathord{\uparrow}}
\newcommand{\cl}{\operatorname{cl}}
\newcommand{\bigsup}{\bigvee}
\newcommand{\biginf}{\bigwedge}
\newcommand\twoheaddownarrow{\mathord{\rotatebox[origin=c]{90}{$\twoheadleftarrow$}}}
\newcommand{\dda}{\twoheaddownarrow}
\newcommand{\finite}[1]{\operatorname{fin}(#1)}
\begin{document}

\title[(Join-continuity + Hypercontinuity = Prime-continuity]{Join-continuity + Hypercontinuity = Prime Continuity}

\author[W. K. Ho]{Weng Kin Ho}	%required
\address{National Institute of Education, Nanyang Technological University, 1 Nanyang Walk, Singapore 637616}	%required
\email{wengkin.ho@nie.edu.sg}  %optional

\author[A. Jung]{Achim Jung}	%optional
\address{School of Computer Science,
         The University of Birmingham, Edgbaston, Birmingham, B15 2TT, United Kingdom}	
\email{axj@cs.bham.ac.uk}  %optional

\author[D. Zhao]{Dongsheng Zhao}	%optional
\address{National Institute of Education, Nanyang Technological University, 1 Nanyang Walk, Singapore 637616}	%optional
\email{dongsheng.zhao@nie.edu.sg}  %optional

%% etc.

%% required for running head on odd and even pages, use suitable
%% abbreviations in case of long titles and many authors:

%% mandatory lists of keywords and classifications:
\keywords{Scott topology; meet-continuous; join-continuous; quasicontinuous; hypercontinuous; prime-continuous; frames}
\subjclass[2010]{06B35}
%%%%%%%%%%%%%%%%%%%%%%%%%%%%%%%%%%%%%%%%%%%%%%%%%%%%%%%%%%%%%%%%%%%%%%%%%%%

%% the abstract has to PRECEED the command \maketitle:
%% be sure not to issue the \maketitle command twice!

\begin{abstract}
  \noindent A remarkable result due to Kou, Liu \& Luo states that the condition
  of continuity for a dcpo can be split into quasi-continuity and
  meet-continuity. Their argument contained a gap, however, which is
  probably why the authors of the monograph \emph{Continuous Lattices and
    Domains} used a different (and fairly sophisticated) sequence of
  lemmas in order to establish the result. In this note we show that
  by considering the Stone dual, that is, the lattice of Scott-open
  subsets, a straightforward proof may be given. We do this by showing
  that a complete lattice is prime-continuous if and only if it is
  join-continuous and hypercontinuous. A pleasant side effect of this
  approach is that the characterisation of continuity by Kou, Liu \&
  Luo also holds for posets, not just dcpos.
\end{abstract}

\maketitle

%% start the paper here:
\section{Introduction}
\label{sec: intro}
The notion of continuity can be said to be the very foundation of the
whole of domain theory.  The pioneering class of domains, continuous
lattices, introduced by Dana Scott in~\cite{scott72} was intended for
applications in theoretical computer science~\cite{scott73}.  In these
applications, the phenomenon of approximation can be formalized in any
partial order by using the way-below relation $\ll$.  A non-empty
subset $D$ of a poset $P$ is \emph{directed} if two elements in $D$
always have an upper bound in $D$.  For any $x, y \in P$, $x \ll y$ if
for any directed set $D$, $\bigsup D \geq y$ implies $D~ \cap \ua x \neq
\emptyset$ whenever $\bigsup D$ exists.  Roughly speaking, one may
view $x \ll y$ as `$x$ is an approximation of $y$'.  We say that a
poset $P$ is \emph{continuous} if for each $x \in P$, there are enough
elements approximating it in the sense that $\dda x := \{ p \in P \mid
p \ll x\}$ is directed and $\bigsup \dda x = x$.  Researchers in
continuous lattices soon extended their study to more general classes
of partial orders, ranging from directed complete posets
(\emph{dcpos}, for short) (\cite{gierzetal03}) to just posets (see,
for example, \cite{wangetal10}), and hence the birth of the term
\emph{domain} which is meant to include all partially ordered
structures that are equipped with some form of approximation.

In recent years the development of domain theory saw the evolution of
various kinds of continuous structures. One thread of generalisation
involves the replacement of directed subsets by other kinds of subsets
(called \emph{$Z$-sets}) so that most of the existing results in
domain theory carry over to a more general setting, initiated by the
work of~\cite{wrightetal78} and followed by later works such as
\cite{bandelterne83, baranga96, menon96, erne99, ernezhao01}. In fact,
Raney's characterization~(\cite{raney52}) of completely distributive
lattices as supercontinuous complete lattices (also called \emph{prime
  continuity} in~\cite[p.107]{abramskyjung94}; see also \cite[Exercise
8.3.15]{goubaultlarrecq16}) is a forerunner of this generalisation to
$Z$-subset systems. Connections have been made with mainstream domain
theory; for instance when directed sets are replaced by Scott-closed
sets, an order-theoretic characterization of the Hoare powerdomain was
obtained using the notion of $C$-continuity~(\cite{hozhao09}).

Another distinctive thread of generalisation began with the invention
of quasicontinuous domains by Gierz, Lawson, and Stralka
in~\cite{gierzetal83}.  Instead of changing the directed subsets, the
idea was to extend the way-below relation between two points in a
poset to that between two finite subsets.  More precisely, for any two
nonempty subsets $F$ and~$G$ of a poset~$P$, define $F \ll G$ if
whenever an existing supremum of a directed set $D$ is in $\ua F$,
then $D \cap {\ua G} \neq \emptyset$. A poset $P$ is said to be
\emph{quasicontinuous} if for all $x \in P$,  $\finite{x} := \{\ua F
\mid F \text{ is a finite subset of } P,~F \ll \{x\}\}$ is a directed
family of subsets of the poset with respect to reverse inclusion, and
$\bigcap \finite{x} = \ua x$.  Unlike the $Z$-generalisation, this
current trend in domain theory to develop a more complete
understanding of quasicontinuity
% (which we term as the `quasi' movement)
has had a powerful impact on the development of domain theory itself. We
highlight three important instances of this: (1)~The Scott topology of
quasicontinuous domains are exactly the hypercontinuous lattices;
also, the theory of quasicontinuous domains makes connection with the
Scott and Lawson topologies, \cite{gierzlawson81,gierzetal83}.
(2)~Besides hypercontinuity, meet-continuity for dcpos is yet another
example of a relatively novel variant of continuity. Invented
initially as a generalisation from complete lattices to dcpos, this
new notion turns out to have close connections with Hausdorff
separation, quasicontinuity, continuity and Scott-filter bases,
\cite{kouliuluo03}. (3)~A recent and significant milestone is the
introduction of $\mathbf{QRB}$ domains (\emph{quasi-retracts of
  bifinite domains}) by Jean Goubault-Larrecq in his attempt to make
progress with the Jung-Tix problem, \cite{goubaultlarrecq12}. Of
course, one expects fusion of the $Z$ and `quasi' approaches as
already witnessed by \cite{xuliu03,xuetal05}.

With the prolific emergence of new kinds of continuity in domain
theory, it is important to understand relationships among them.
%One way to express such relationships is to present them in a `word
%equation', and
Here are some examples of known connections:
\begin{enumerate}
\item continuity + $C$-continuity = prime continuity (= complete
  distributivity) holds for complete lattices
(\cite[Theorem 3.11]{hozhao09})
\item meet continuity + quasicontinuity = continuity
holds for dcpos
\item prime continuity $\implies$ continuity $\implies$ meet continuity
\item prime continuity $\implies$ hypercontinuity $\implies$ continuity
\end{enumerate}

This paper is specifically about Equation~(2), first stated
in~\cite[p. 122, Theorem 2.5]{kouliuluo03}.\footnote{Unfortunately,
  the proof given in \cite{kouliuluo03} contains a faulty
  argument as we shall explain in Section~\ref{sec: main results}.}
% The second author of
% this paper pointed out that this equation holds for the class of
% dcpo's but its
% The proof, as supplied in \cite[Proposition
% III-3.10]{gierzetal03}, is difficult because of the use of a rather
% esoteric (and technical) lemma (\cite[Lemma III-2.10]{gierzetal03}).
We ask the following two questions:
\begin{enumerate}
\item Is it possible to provide a proof of~(2) that exploits our
  knowledge of the structure of the lattice of Scott-open subsets?
% Can we find an alternative proof which is more structural in nature, and hence easier to follow?
\item Is the statement still true when we extend it to the class of all posets?
\end{enumerate}
% A simple literature review points us to the original source, which we state below:
% \begin{theorem} (\cite[p. 122, Theorem 2.5]{kouliuluo03})
% A quasicontinuous dcpo $P$ is continuous if and only if it is a mc-dcpo\footnote{Here, mc-dcpo stands for meet-continuous dcpo.}.
% \end{theorem}
%The idea behind the proof of the above result given in \cite{kouliuluo03} differs from that found in \cite[p. 230]{gierzetal03} and is more natural.

Indeed, below we establish the following for
%posets:
complete lattices:
\begin{equation}
\text{join continuity} + \text{ hypercontinuity } = \text{prime continuity}
\end{equation}
Based on this result, we will be able to answer affirmatively our two
questions.
%, and also fix the aforementioned gap in the proof.

\section{Preliminaries}
\label{sec: prelim}
We gather here all the definitions and results that we need in
Section~\ref{sec: main results}, leaving out all proofs.  However, we
take extra care to ensure that none of these make use, directly or
indirectly, of Lemma III-2.10 and Proposition III-2.10
in~\cite{gierzetal03} so as to make this note as self-contained as
possible.

A subset $U$ is \emph{upper} if $U = \ua U$, where $\ua U : =\{p \in P
\mid \exists u \in U.~u \leq p\}$. We call a subset $U$ of a poset $P$
\emph{Scott open} if (i) $U$ is upper and (ii) whenever an existing
supremum of a directed set $D$ is in $U$, then already $D \cap U \neq
\emptyset$. The collection of Scott opens of $P$, denoted by
$\sigma(P)$, defines a topology on it, termed as the \emph{Scott
  topology}. A subset $C$ of $P$ is \emph{Scott-closed} if $P
\backslash C \in \sigma(P)$. We use $\Gamma(P)$ (or $\sigma^{\op}(P)$)
to denote the collection of Scott-closed sets of $P$. Both $\sigma(P)$
and $\Gamma(P)$, when ordered by set inclusion, become complete
lattices, and we overload the symbols $\sigma(P)$ and $\Gamma(P)$ to
refer to these complete lattices.

A completely distributive lattice $L$ is a complete lattice in which
the following, so-called complete distributive law, is satisfied: for
all families $(u^i_j)_{j \in J_i}$, one for each $i \in I$,
$\biginf_{i \in I} \bigsup_{j \in J_i} u^i_j = \bigsup_{f \in \Pi_{i
    \in I} J_i} \biginf_{i \in I} u^i_{f(i)}$.
We have the following well-known result:
\begin{thm} (\cite{hoffmann81}) \label{thm: spectra of completely distributive}
The following statements are equivalent for a poset $P$:
\begin{enumerate}
\item $P$ is a continuous poset.
\item $\sigma(P)$ is a completely distributive lattice.
\end{enumerate}
\end{thm}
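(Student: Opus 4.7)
\textbf{Proof plan for Theorem~\ref{thm: spectra of completely distributive}.} I would tackle the two implications separately; the forward direction is a direct verification, while the backward direction is substantially more delicate.

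\emph{(1) $\Rightarrow$ (2).} In $\sigma(P)$, suprema coincide with unions and infima are Scott-interiors of set-theoretic intersections, so complete distributivity reduces to the inclusion
\[
 \intr\bigl(\bigcap_{i \in I} \bigcup_{j \in J_i} U^i_j\bigr) \;\subseteq\; \bigcup_{f \in \prod_i J_i} \intr\bigl(\bigcap_{i \in I} U^i_{f(i)}\bigr),
\]
the reverse being routine in any complete lattice. Given $x$ in the left-hand side, continuity of $P$ supplies some $y \ll x$ lying in this Scott-open neighbourhood (since $\dda x$ is directed with supremum $x$ and must therefore meet the open); for each $i$ pick $f(i) \in J_i$ with $y \in U^i_{f(i)}$. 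Each $U^i_{f(i)}$ is upper, so $\ua y \subseteq U^i_{f(i)}$, and hence $\dua y \subseteq \bigcap_i U^i_{f(i)}$. The interpolation property $y \ll z \Rightarrow \exists w.\ y \ll w \ll z$, a standard consequence of continuity, renders $\dua y$ itself Scott open, whence $\dua y \subseteq \intr\bigl(\bigcap_i U^i_{f(i)}\bigr)$. Since $x \in \dua y$, the inclusion is established.

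\emph{(2) $\Rightarrow$ (1).} Assume $\sigma(P)$ is completely distributive. For each $x \in P$ and each Scott open $U \ni x$ I would aim to produce some $y \in U$ with $y \ll x$. Given this, the conclusion $\bigsup \dda x = x$ follows readily: $x$ is an upper bound of $\dda x$, and if $z$ were a strictly smaller upper bound then $x \in P \setminus \da z$, a Scott open that must meet $\dda x$, contradiction. Directedness of $\dda x$ then requires a variant of the same construction applied to Scott opens that encode a pair $y_1, y_2 \ll x$ and coax out a common upper bound still way-below $x$. The essential tool is Raney's characterisation: $U = \bigcup\{V \in \sigma(P) \mid V \triangleleft U\}$, with $\triangleleft$ the completely-below relation on $\sigma(P)$; choosing $V$ with $x \in V \triangleleft U$ provides an abstract approximation of $x$ inside $U$, which one must convert into the pointwise assertion $y \ll x$.

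The main obstacle is precisely this translation step. The plan is to test $V \triangleleft U$ against a cleverly chosen Scott-open cover of $U$, indexed for example by the directed subsets of $P$ whose suprema lie in $U$, so that $V$ being contained in one member of the cover isolates a single such directed set and produces an element of it lying way-below $x$. Both the construction of the cover --- ensuring each member is genuinely Scott open and that the members collectively cover $U$ --- and the extraction of the way-below witness are the subtle technical ingredients; additional care is needed to handle the fact that $P$ is a general poset rather than a dcpo, so that directed sets need not have suprema and the covering family must be adjusted accordingly.
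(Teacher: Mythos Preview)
The paper does not give a proof of this theorem. It appears in the preliminaries (Section~\ref{sec: prelim}) with a citation to~\cite{hoffmann81}, and the authors state explicitly at the start of that section that they are ``leaving out all proofs''. There is therefore no in-paper argument against which to compare your proposal.

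As an independent assessment: your $(1)\Rightarrow(2)$ argument via interpolation is correct and is the standard route. Your $(2)\Rightarrow(1)$ is, by your own admission, only a plan: you correctly identify Raney's representation $U=\bigcup\{V\in\sigma(P)\mid V\triangleleft U\}$ as the starting point and correctly flag the conversion of the lattice-level relation $V\triangleleft U$ into a pointwise witness $y\ll x$ as the crux, but the ``cleverly chosen Scott-open cover'' that would effect this conversion is never actually constructed. That construction is precisely where the content of Hoffmann's argument lies, so the backward direction as written remains a sketch rather than a proof.
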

Define on a complete lattice $L$ the \emph{way-way-below} relation
$\triangleleft$ as follows: $u \triangleleft v$ if for any $S
\subseteq L$, whenever $\bigsup S \geq v$ then $u \in {\da S}$. A
complete lattice $L$ is said to be \emph{prime continuous} if every
element in $L$ is the least upper bound of all elements way-way-below
it. It is straightforward to show that a complete lattice $L$ is prime
continuous if and only if it is completely distributive, and this was
first established in~\cite{raney52}, albeit using a very different
formulation. In view of the focus of this paper, we use the term
`prime continuity' in preference to `complete distributivity'.

Besides prime continuity, meet continuity is the property that one
encounters very often in domain theory. A complete lattice~$L$ is
\emph{meet continuous} if for all $x \in P$ and all directed
subsets~$D$ of~$P$, it holds that $x \wedge \bigsup D = \bigsup \{x
\wedge d \mid d \in D\}$. This property can be characterised by the
Scott topology as $x \in \cl_\sigma(\da x ~\cap \da D)$ whenever $x
\leq \bigsup D$. Since the meet operator is not involved, this
topological property of meet continuity can be used to give a natural
extended meaning to \emph{meet continuity} in the more general setting
of dcpos. Dcpos which enjoy meet continuity are called \emph{meet
  continuous dcpos.}\footnote{This definition was first proposed
 in~\cite{kou98}.} This definition quickly generalises to \emph{meet
  continuous posets} where the phrase `all directed subsets of $P$' is
replaced by `all directed subsets of $P$ whose suprema exist'
(\cite{maoxu09}).

It is very natural to ask if the meet continuity of a poset $P$ can be recognized from the properties of $\sigma(P)$.  The answer is yes:
\begin{thm} (\cite{maoxu09}, Theorem 3.8) \label{thm: scott open of meet cont is join cont}
The following statements are equivalent for a poset~$P$:
\begin{enumerate}
\item $P$ is meet continuous.
\item $\sigma(P)$ is join continuous.
\item $\sigma^{\op}(P)$ is a frame.
\end{enumerate}
\end{thm}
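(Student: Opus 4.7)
The plan is to route the proof through the order-dual complete lattice $\Gamma(P) = \sigma^{\op}(P)$, splitting the equivalence into (2)$\Leftrightarrow$(3) and (1)$\Leftrightarrow$(3).

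For (2)$\Leftrightarrow$(3), since $\Gamma(P)$ is the order dual of $\sigma(P)$, join continuity of $\sigma(P)$ is precisely meet continuity of $\Gamma(P)$---i.e.\ finite meets distribute over directed joins. Meanwhile, in $\Gamma(P)$ finite joins are simply finite unions of closed sets, so the set-theoretic identity $C \cap (C_1 \cup C_2) = (C \cap C_1) \cup (C \cap C_2)$ provides finite distributivity for free. Writing any join as the directed join of its finite subjoins then shows that meet continuity of $\Gamma(P)$, combined with this automatic finite distributivity, is equivalent to the full frame property.

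For the direction (3)$\Rightarrow$(1), given $x \leq \bigvee D$ with $D$ directed, I would apply the frame property with $C := \da x$ and the directed family $\{\da d \mid d \in D\} \subseteq \Gamma(P)$. Any Scott-open neighbourhood of $\bigvee D$ meets $D \subseteq \da D$, so $\bigvee D \in \cl_\sigma(\da D)$, and lowerness of this Scott-closed set forces $x \in \cl_\sigma(\da D)$. Frame distributivity then yields
\[
x \in \da x \cap \cl_\sigma(\da D) = \cl_\sigma\!\left(\bigcup_{d \in D}(\da x \cap \da d)\right) = \cl_\sigma(\da x \cap \da D),
\]
which is the topological form of meet continuity.

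For the direction (1)$\Rightarrow$(3), the main technical step, it suffices by the reduction in the first paragraph to verify $C \cap \bigvee_i C_i \subseteq \bigvee_i (C \cap C_i)$ for any directed family $\{C_i\}_{i \in I}$ in $\Gamma(P)$. I would construct $\cl_\sigma(\bigcup_i C_i)$ transfinitely by setting $L_0 := \bigcup_i C_i$, $L_{\alpha+1} := L_\alpha \cup \da\{\bigvee E \mid E \subseteq L_\alpha \text{ directed with existing supremum}\}$, and $L_\lambda := \bigcup_{\beta < \lambda} L_\beta$ at limits, each $L_\alpha$ remaining a lower set by construction. The inductive claim is $L_\alpha \cap C \subseteq \cl_\sigma(\bigcup_i (C \cap C_i))$: in the successor step, a genuinely new element $y \in L_{\alpha+1} \cap C$ satisfies $y \leq \bigvee E$ for some directed $E \subseteq L_\alpha$, so meet continuity delivers $y \in \cl_\sigma(\da y \cap \da E)$; since $\da y \subseteq C$ (as $C$ is lower) and $\da E \subseteq L_\alpha$ (as $L_\alpha$ is lower), the approximating set $\da y \cap \da E$ sits inside $L_\alpha \cap C$, which by induction lies in the Scott-closed set $\cl_\sigma(\bigcup_i (C \cap C_i))$. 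The main obstacle is to justify that this transfinite chain really terminates at $\cl_\sigma(\bigcup_i C_i)$ while keeping each stage a lower set, since that is what validates $\da E \subseteq L_\alpha$ at the decisive inductive step.
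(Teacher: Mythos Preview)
The paper does not prove this theorem; it is quoted from Mao and Xu in the preliminaries section, where the authors explicitly say they are ``leaving out all proofs.'' So there is nothing in the paper to compare your argument against.

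That said, your proposal is essentially correct, with two comments.

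First, (2)$\Leftrightarrow$(3) is immediate from the paper's definitions and needs no work: a frame is \emph{defined} there as the order dual of a join-continuous complete lattice, and $\sigma^{\op}(P)$ is by construction the order dual of $\sigma(P)$. Your gloss ``join continuity of $\sigma(P)$ is precisely meet continuity of $\Gamma(P)$---i.e.\ finite meets distribute over directed joins'' is a misstatement under the paper's conventions, since join continuity there quantifies over \emph{all} subsets~$S$, and its order dual is already the full frame law, not just the directed case. Your recovery (directed law plus automatic finite distributivity yields the full frame law) is correct but unnecessary.

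Second, the ``main obstacle'' you flag in (1)$\Rightarrow$(3) is not really an obstacle. The chain $(L_\alpha)$ stabilises by a cardinality bound; the stable stage is a lower set (union of lower sets) that is closed under existing directed suprema, hence Scott-closed; and an easy transfinite induction shows any Scott-closed set containing $\bigcup_i C_i$ contains every~$L_\alpha$, so the limit is exactly $\cl_\sigma\bigl(\bigcup_i C_i\bigr)$. Each $L_\alpha$ being lower is preserved at every step (successor stages add a set of the form $\da(\cdot)$ to a lower set; limit stages take unions of lower sets), so $\da E \subseteq L_\alpha$ holds where you need it. Your inductive step is sound, and the reduction to directed families via finite distributivity of $\Gamma(P)$ is legitimate.
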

Here, a complete lattice $L$ is said to be \emph{join continuous} if
for all $x \in L$ and all $S \subseteq L$, we have $x \vee \biginf S =
\biginf \{x \vee s \mid s \in S\}$. A \emph{frame} is just the order
dual of a join continuous complete lattice. Since prime continuity is
equivalent to complete distributivity, it is immediate that
\begin{equation} \label{eq: prime implies join}
\text{prime continuity} \implies \text{join continuity}
\end{equation}
and
\begin{equation}
\text{prime continuity} \implies \text{frame}.
\end{equation}

A third type of continuity that is central to our present discussion is hypercontinuity.  Analogous to continuity, this concept is defined via a certain auxiliary relation on a complete lattice $L$: $x \prec y$ if whenever the intersection of a nonempty collection of upper sets is contained in $\ua y$, then the intersection of finitely many is contained in $\ua x$.
A complete lattice $L$ is called \emph{hypercontinuous} if for all $y \in L$, we have
$y = \bigsup \{x \in L \mid x \prec y\}$.  The following sup-inf characterizations of continuity, hypercontinuity and prime continuity give an immediate insight into the relations among these different notions of continuity:
\begin{thm} \label{thm: 3 characterisations}
Let $L$ be a complete lattice.
\begin{enumerate}
\item $L$ is continuous if and only if for all $x \in L$,
\[
x = \bigsup \{\biginf U \mid x \in U \in \sigma(L)\}.
\]
\item $L$ is hypercontinuous if and only if for all $x \in L$,
\[
x = \bigsup \{\biginf (L \backslash \da M) \mid M \text{ is a finite subset of } L, x \not \in \da M\}.
\]
\item $L$ is prime continuous if and only if for all $x \in L$,
\[
x = \bigsup \{\biginf (L \backslash \da y) \mid x \not \in \da y\}.
\]
\end{enumerate}
\end{thm}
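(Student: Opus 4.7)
The plan is to treat the three equivalences in parallel since each follows the same template: characterise the underlying auxiliary relation ($\ll$, $\prec$, or $\triangleleft$) by the existence of a witness of the appropriate shape (a Scott-open, the complement of a finitely generated down-set, or the complement of a principal ideal), and then read off the sup-inf formula because on the right-hand side one is effectively taking the supremum of all such witnesses.

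For part (2), I would first prove the characterisation that $x \prec y$ if and only if there is a finite $M \subseteq L$ with $y \notin \da M$ and $L \setminus \da M \subseteq \ua x$, equivalently $x \leq \biginf(L \setminus \da M)$. The forward direction applies the definition of $\prec$ to the family $\{L \setminus \da z \mid y \not\leq z\}$ of upper sets, whose intersection is precisely $\ua y$; finitely many of its members must then intersect into $\ua x$ and their indices form the desired $M$. The reverse direction is a pigeonhole argument: given any family $(U_i)_{i \in I}$ of upper sets with $\bigcap_i U_i \subseteq \ua y$, each $m \in M$ lies outside $\ua y$ and therefore outside some $U_{i(m)}$, and the finite intersection $\bigcap_{m \in M} U_{i(m)}$ cannot reach $\da M$, since upperness of $U_{i(m)}$ would otherwise force $m \in U_{i(m)}$. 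With this characterisation the sets $\{x \mid x \prec y\}$ and $\{\biginf(L \setminus \da M) \mid M \text{ finite}, y \notin \da M\}$ have the same supremum, which delivers (2).

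Part (3) follows the same pattern but with single elements $y$ in place of finite sets $M$. I would show $u \triangleleft v$ if and only if there exists $y \in L$ with $v \not\leq y$ and $L = \da y \cup \ua u$, equivalently $u \leq \biginf(L \setminus \da y)$. The forward direction takes $y := \bigsup(L \setminus \ua u)$: the definition of $\triangleleft$ applied to $S = L \setminus \ua u$ rules out $\bigsup S \geq v$ (otherwise $u \leq s$ for some $s$ with $u \not\leq s$), and any $z \not\geq u$ lies in $S$ and hence below $y$. The reverse direction splits the test-set $S$ into elements lying below $y$ and elements not; the former case contradicts $v \not\leq y$, and in the latter case such an $s$ lies in $L \setminus \da y \subseteq \ua u$, giving $u \in \da S$.

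For (1), the backward direction is direct: given $x \in U \in \sigma(L)$, if $\bigsup D \geq x$ for a directed $D$ then $\bigsup D \in U$ forces $D \cap U \neq \emptyset$, so $\biginf U \ll x$; directedness of $\dda x$ follows because the family $\{\biginf V \mid x \in V \in \sigma(L)\}$ is already directed (finite intersections of Scott-opens are Scott-open). The forward direction is the main obstacle: for each $p \ll x$ I need a Scott-open $U \ni x$ with $p \leq \biginf U$. My plan is to invoke the interpolation property of $\ll$ in a continuous lattice to pick $q$ with $p \ll q \ll x$ and then take $U = \dua q$, whose Scott-openness is itself another application of interpolation. This is the single deep ingredient of the whole theorem, and is where one leans on genuinely continuous-lattice machinery rather than purely order-topological bookkeeping.
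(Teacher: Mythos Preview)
Your arguments are correct. The paper itself does not give a detailed proof of this theorem: it simply asserts that (1) ``follows directly from the basic definitions of Scott-open set and way-below relation'' and defers (2) and (3) to \cite[Proposition~VII-3.3]{gierzetal03} and \cite{raney52}, respectively. Your template of characterising each auxiliary relation ($\ll$, $\prec$, $\triangleleft$) by an appropriate witness and then reading off the sup--inf identity is exactly the line taken in those references, so in spirit you are reproducing the intended proofs. Two minor remarks: in part~(2) your forward argument uses the family $\{L\setminus\da z \mid y\not\leq z\}$, which is empty when $y=\bot$; this edge case is harmless (then $x\prec y$ forces $x=\bot$ and $M=\emptyset$ serves as witness), but worth noting. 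In part~(1) you correctly identify interpolation as the substantive ingredient; the paper regards this as part of ``the basic definitions,'' but your more explicit account is the honest one.
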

\begin{proof}
(1) follows directly from the basic definitions of Scott-open set and way-below relation.
The proofs for (2) and (3) can be found at \cite[p.509, Proposition VII-3.3]{gierzetal03} and
\cite{raney52}, respectively.
\end{proof}
Hence we have the following chain:
\begin{equation} \label{eq: prime implies hyper implies cts}
\text{prime continuity} \implies \text{hypercontinuity} \implies \text{continuity}.
\end{equation}

Hypercontinuity and quasicontinuity are connected via the following crucial result:
\begin{thm} (\cite{maoxu09})
\label{thm: spectrum of hypercontinuous}
The following are equivalent for a poset $P$:
\begin{enumerate}
\item $P$ is a quasicontinuous poset.
\item $\sigma(P)$ is a hypercontinuous lattice.
\end{enumerate}
\end{thm}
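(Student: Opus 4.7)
The plan is to prove both directions through the sup-inf characterisation of hypercontinuity in Theorem~\ref{thm: 3 characterisations}(2). That characterisation reduces hypercontinuity of $\sigma(P)$ to the assertion that, for every $U \in \sigma(P)$ and every $x \in U$, there is a finite $\mathcal{M} \subseteq \sigma(P)$ with $U \notin \da\mathcal{M}$ and $x \in \biginf_{\sigma(P)}(\sigma(P) \setminus \da \mathcal{M}) \subseteq U$. The bridge between such finite collections $\mathcal{M}$ and finite subsets $F \subseteq P$ with $F \ll x$ will be mediated by the Scott-open sets $V_f := \bigcup \{W \in \sigma(P) : f \notin W\}$, each being the largest Scott-open not containing the point $f$.

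For $(1) \Rightarrow (2)$, I would start with $U \in \sigma(P)$ and $x \in U$ and use quasicontinuity of $P$ to extract a finite $F \ll x$ with $\ua F \subseteq U$; this follows from the directedness of $\finite{x}$ under reverse inclusion together with $\bigcap \finite{x} = \ua x \subseteq U$, via a standard compactness-type argument. Taking $\mathcal{M} := \{V_f : f \in F\}$, one checks that $U \notin \da\mathcal{M}$ because each $f \in U$ forces $U \not\subseteq V_f$, and that a Scott-open $W$ lies in $\sigma(P) \setminus \da\mathcal{M}$ iff $f \in W$ for every $f \in F$, i.e.\ $\ua F \subseteq W$. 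Hence $\bigcap(\sigma(P) \setminus \da\mathcal{M}) \supseteq \{y \in P : F \ll y\}$, and this latter set is Scott-open in the quasicontinuous poset $P$ and contains $x$. Consequently $x \in \biginf_{\sigma(P)}(\sigma(P) \setminus \da\mathcal{M}) \subseteq U$, proving the sup-inf formula.

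For $(2) \Rightarrow (1)$, I would invoke Theorem~\ref{thm: 3 characterisations}(2) to obtain, for each $U \in \sigma(P)$ and $x \in U$, a finite $\mathcal{M} = \{V_1, \ldots, V_n\}$ with $U \notin \da\mathcal{M}$ and $x \in W_0 := \biginf_{\sigma(P)}(\sigma(P) \setminus \da\mathcal{M}) \subseteq U$. Picking $f_i \in U \setminus V_i$ for each $i$ and setting $F = \{f_1, \ldots, f_n\}$ makes $\ua F \subseteq U$ automatic. To establish $F \ll x$, let $D$ be directed with $\sup D \geq x$; since $\sup D \in W_0$ and $W_0$ is Scott-open, there is some $d \in D \cap W_0$. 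If $d \notin \ua F$, then the Scott-open set $W' := P \setminus \da d$ satisfies $f_i \in W' \setminus V_i$ for every $i$, placing $W'$ in $\sigma(P) \setminus \da\mathcal{M}$ and forcing $W_0 \subseteq W'$; but then $d \in W_0 \subseteq P \setminus \da d$ contradicts $d \leq d$. This yields the key local property: for every Scott-open $U$ and every $x \in U$ there is a finite $F \ll x$ with $\ua F \subseteq U$, together with a Scott-open witness $W_0 \subseteq \ua F$ containing $x$. From here, $\bigcap \finite{x} = \ua x$ follows by applying the property to $U = P \setminus \da y$ for each $y \not\geq x$, and directedness of $\finite{x}$ under reverse inclusion follows by applying the property to intersections of such open witnesses.

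The principal obstacle is the contradiction step in the backward direction, which interlocks the Scott-openness of $W_0$ (which supplies the crucial $d$) with the upper-set structure underlying each $V_i$ (which forces $W' = P \setminus \da d$ into $\sigma(P) \setminus \da\mathcal{M}$). A secondary but real technical point is securing directedness of the full family $\finite{x}$ for arbitrary $F_1, F_2 \ll x$, not merely for the $F$'s produced by the sup-inf construction; this reduces to showing that $x$ lies in $\intr_\sigma(\ua F)$ for every $F \ll x$, which is itself a consequence of the local approximation property already established.
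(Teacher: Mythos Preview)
The paper does not give its own proof of this theorem: it is stated in Section~\ref{sec: prelim} as a preliminary result with a citation to~\cite{maoxu09}, and the section explicitly announces that all proofs are omitted. So there is no argument in the paper to compare your proposal against.

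Assessed on its own terms, your plan is sound in outline and the core contradiction step in $(2)\Rightarrow(1)$ is correct and elegant: once you observe that $W'=P\setminus\da d$ lies in $\sigma(P)\setminus\da\mathcal{M}$ you indeed get $d\in W_0\subseteq W'$, a contradiction; in fact the same argument shows directly that $W_0\subseteq\ua F$, which makes $F\ll x$ immediate. Two places deserve more care, however. In $(1)\Rightarrow(2)$ you invoke without proof that a quasicontinuous poset admits, for every Scott-open $U\ni x$, a finite $F\ll x$ with $\ua F\subseteq U$, and that $\{y:F\ll y\}$ is Scott-open; both are standard but rest on Rudin's Lemma and the interpolation property, so they should at least be flagged as such. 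More seriously, in $(2)\Rightarrow(1)$ your final claim that ``$F\ll x\Rightarrow x\in\intr_\sigma(\ua F)$ is itself a consequence of the local approximation property'' is not justified. What the local property gives you is a \emph{directed subfamily} $\mathcal{B}_x\subseteq\finite{x}$ with $\bigcap\mathcal{B}_x=\ua x$, but the paper's definition of quasicontinuity requires the \emph{full} family $\finite{x}$ to be directed. Passing from the former to the latter is exactly where Rudin's Lemma enters (for dcpos: if no $\ua G\in\mathcal{B}_x$ is contained in $\ua F$, the sets $G\cap(P\setminus\ua F)$ form a directed family of nonempty finite sets in the lower set $P\setminus\ua F$, yielding a directed $D\subseteq P\setminus\ua F$ whose supremum lies in $\ua x$, contradicting $F\ll x$); for general posets one must argue more delicately since $D$ need not have a supremum. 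You have correctly identified this as the ``secondary but real technical point,'' but the reduction you sketch does not close it.
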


\section{Main results}
\label{sec: main results}

\begin{lem} \label{thm: finite reduce to one} Let $L$ be a
  join-continuous complete lattice. Then for any finite set $M =
  \{m_1,\ldots,m_n\} \subseteq L$, the following equation holds:
  \[
  \biginf (L \backslash \da M) = \bigsup_{k=1}^n \biginf (L \backslash \da m_k).
  \]
\end{lem}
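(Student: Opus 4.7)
The plan is to prove the two inequalities separately, with the easy inclusion coming from a direct set-theoretic observation and the hard one being driven by the join-continuity hypothesis.

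For the $\geq$ direction: since $\da M = \bigcup_{k=1}^n \da m_k$, we have $L \setminus \da M = \bigcap_{k=1}^n (L \setminus \da m_k) \subseteq L \setminus \da m_k$ for each $k$. Taking infima reverses inclusion, so $\biginf (L \setminus \da m_k) \leq \biginf (L \setminus \da M)$ for each $k$, and taking the supremum on the left gives $\bigsup_{k=1}^n \biginf (L \setminus \da m_k) \leq \biginf (L \setminus \da M)$. This half does not require join-continuity.

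For the $\leq$ direction, the idea is to convert the ``supremum of infima'' on the right-hand side into a single infimum over tuples by iterated application of join-continuity. I would first prove by induction on $n$ that for any family of subsets $S_1, \ldots, S_n \subseteq L$,
\[
\bigsup_{k=1}^n \biginf S_k = \biginf \bigl\{ x_1 \vee \cdots \vee x_n \;\bigm|\; x_k \in S_k \text{ for each } k \bigr\}.
\]
The base case $n=1$ is immediate, and the inductive step pulls the outer join inside the infimum via $x \vee \biginf S = \biginf\{x \vee s : s \in S\}$, applied once for the outermost element and once more (using the inductive hypothesis rewritten as an infimum) to merge the two infima into a single one over all tuples. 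Specialising to $S_k := L \setminus \da m_k$ rewrites the right-hand side of the lemma as $\biginf \{ x_1 \vee \cdots \vee x_n : x_k \not\leq m_k \text{ for each } k\}$.

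The proof then closes by the following observation: whenever $x_k \not\leq m_k$ for each $k$, the join $x_1 \vee \cdots \vee x_n$ is not below any $m_j$, because $x_1 \vee \cdots \vee x_n \leq m_j$ would force $x_j \leq m_j$, contradicting the choice of $x_j$. Hence every tuple-join appearing in the infimum lies in $\bigcap_{j} (L \setminus \da m_j) = L \setminus \da M$, so it is bounded below by $\biginf (L \setminus \da M)$. Taking the infimum over all such tuples preserves this bound, giving $\bigsup_{k=1}^n \biginf (L \setminus \da m_k) \geq \biginf (L \setminus \da M)$, which completes the proof. The main subtlety is the iterated use of join-continuity in the inductive step; the rest is a clean bookkeeping exercise, and notably the argument uses neither finiteness of $M$ beyond allowing induction, nor any topological structure.
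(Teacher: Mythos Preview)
Your proof is correct and follows essentially the same approach as the paper: both arguments use induction together with two applications of join continuity to rewrite $\bigsup_{k}\biginf(L\setminus\da m_k)$ as an infimum over joins of tuples, and then observe that each such join lies in $L\setminus\da M$. The only cosmetic differences are that you separate the two inequalities at the outset and abstract the distributivity step into a general identity for arbitrary subsets $S_1,\ldots,S_n$, whereas the paper inducts directly on the stated equation and shows equality of the set of tuple-joins with $L\setminus\da M$ rather than just the inclusion you need.
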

\begin{proof}
  Note that the statement holds in the case that $M$ is empty as both
  sides then reduce to the least element of~$L$. If $M$ is nonempty
  then we use induction on~$n$. For $n = 1$ the equation is trivially
  true, so assume that the equation holds for all nonempty finite sets
  with $n$ elements; we must show that
  \[
  \biginf (L \backslash \da \{m_1,\ldots,m_n,m_{n+1}\}) = \bigsup_{k=1}^{n+1} (\biginf L \backslash \da m_k).
  \]
  By the induction hypothesis,
  \[
  \bigsup_{k=1}^n \biginf (L \backslash \da m_k) = \biginf (L \backslash \da \{m_1,\ldots,m_n\}).
  \]
  Thus, we have:
  \[
  \begin{split}
    \bigsup_{k=1}^{n+1} \biginf (L \backslash \da m_k)
    & = \left(\biginf (L \backslash \da m_{n+1})\right) \vee \biginf (L \backslash \da \{m_1,\ldots,m_n\}) \\
    & = \biginf \{\left(\biginf (L \backslash \da m_{n+1})\right) \vee s
    \mid s \in \bigcap_{i=1}^n (L \backslash \da m_i)\} \\
    & \quad
    (\text{Note:~} L \backslash \da \{m_1,\ldots,m_n\} = \bigcap_{i=1}^n (L \backslash \da m_i).)\\
    & = \biginf \left\{\biginf \{r \vee s \mid r \in (L \backslash \da m_{n+1}) \} \mid s \in \bigcap_{i=1}^n (L \backslash \da m_i) \right\} \\
    & = \biginf \left\{r \vee s \mid r \in (L \backslash \da m_{n+1}) \text{ and } s \in \bigcap_{i=1}^n (L \backslash \da m_i) \right\} \\
    % & \geq \biginf \bigcap_{i = 1}^{n+1} (L \backslash \da m_i).
  \end{split}
  \]
  where join continuity is applied twice to obtain the second and
  third equalities. We finish the proof by showing that the set~$X$
  over which the infimum is taken in the last term is the same as $Y=L
  \backslash \da \{m_1,\ldots,m_{n+1}\}$. Indeed, an element of $X$ is by
  construction not below any of the elements $m_1,\ldots,m_{n+1}$, so
  we have $X\subseteq Y$. On the other hand, for any element~$t \in Y$, $t$ is
  in both $L\backslash\da m_{n+1}$ and $\bigcap_{i=1}^n (L \backslash \da
  m_i)$, so $t = t \vee t$ also belongs to $X$.  Thus, $X = Y$.

\end{proof}

\begin{thm} \label{thm: join cont + hyper cont = cdl}
The following statements are equivalent for a lattice $L$:
\begin{enumerate}
\item $L$ is join continuous and hypercontinuous.
\item $L$ is prime continuous.
\end{enumerate}
\end{thm}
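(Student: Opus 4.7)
The plan is to split into two directions. The implication $(2) \Rightarrow (1)$ is essentially free from the preliminaries: Equation~(\ref{eq: prime implies join}) gives prime continuity $\Rightarrow$ join continuity, while the chain~(\ref{eq: prime implies hyper implies cts}) gives prime continuity $\Rightarrow$ hypercontinuity. I would dispose of this direction in a single sentence.

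For the substantive direction $(1) \Rightarrow (2)$, my strategy is to match up the sup-inf characterisations in Theorem~\ref{thm: 3 characterisations}, parts~(2) and~(3). Fixing $x \in L$, hypercontinuity supplies the representation
\[
x = \bigsup \{\biginf (L \backslash \da M) \mid M \text{ is a finite subset of } L,\ x \not\in \da M\},
\]
and the goal is to obtain the representation
\[
x = \bigsup \{\biginf (L \backslash \da y) \mid x \not\in \da y\}.
\]
The bridge between the two is Lemma~\ref{thm: finite reduce to one}, which (using join continuity) rewrites each finite-set term as a supremum of singleton terms: for $M = \{m_1,\ldots,m_n\}$ we have $\biginf (L \backslash \da M) = \bigsup_{k=1}^n \biginf (L \backslash \da m_k)$. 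Crucially, $x \not\in \da M$ is equivalent to $x \not\leq m_k$ for every $k$, so each summand $\biginf (L \backslash \da m_k)$ is one of the terms appearing in the prime-continuity supremum.

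Putting the pieces together, every term in the hypercontinuity supremum is bounded above by the prime-continuity supremum, so $x \leq \bigsup\{\biginf (L \backslash \da y) \mid x \not\in \da y\}$; the reverse inequality is immediate because $x \in L \backslash \da y$ whenever $x \not\in \da y$, so $\biginf (L \backslash \da y) \leq x$ for each such $y$. The real work has been fully absorbed into Lemma~\ref{thm: finite reduce to one}, so I do not anticipate any further obstacle: once the lemma is granted, the theorem is little more than bookkeeping that dovetails Theorem~\ref{thm: 3 characterisations}(2) with Theorem~\ref{thm: 3 characterisations}(3).
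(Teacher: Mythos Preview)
Your proposal is correct and follows essentially the same route as the paper: dispatch $(2)\Rightarrow(1)$ via~(\ref{eq: prime implies join}) and~(\ref{eq: prime implies hyper implies cts}), and for $(1)\Rightarrow(2)$ combine the hypercontinuity characterisation of Theorem~\ref{thm: 3 characterisations}(2) with Lemma~\ref{thm: finite reduce to one} to obtain the prime-continuity characterisation of Theorem~\ref{thm: 3 characterisations}(3). Your explicit verification of the reverse inequality $\bigsup\{\biginf(L\backslash\da y)\mid x\not\in\da y\}\leq x$ is a small elaboration the paper leaves implicit, but otherwise the arguments coincide.
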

\begin{proof}
  By Theorem~\ref{eq: prime implies hyper implies cts} and
  Theorem~\ref{eq: prime implies join}, we have (2) $\implies$ (1).
  So, it remains to show that (1) $\implies$ (2). To this end, by
  virtue of Theorem~\ref{thm: 3 characterisations}(3), we only need to
  show that for any $x \in L$, we have $x = \bigsup \{\biginf (L
  \backslash \da y) \mid x \not \in {\da y}\}$. Since $L$ is
  hypercontinuous, we have $x = \bigsup \{\biginf (L \backslash \da M)
  \mid M \text{ is a finite subset of } P, x \not \in \da M\}$. But
  for each finite set $M$ with $x \not \in \da M$, by Lemma~\ref{thm:
    finite reduce to one} we can write $\biginf (L \backslash \da M)$ as
  the supremum of terms of the form $\biginf (L \backslash \da m)$ with $m \in M$. Hence
  \[
  x = \bigsup \left\{\biginf (L \backslash \da M) \mid M \text{ is a finite subset of } P, x \not \in \da M\right\} =
  \bigsup \left\{\biginf (L \backslash \da y) \mid x \not \in \da y\right\}.
  \]
\end{proof}

\begin{thm}
\label{thm: main result}
The following statements are equivalent for a poset $P$:
\begin{enumerate}
\item $P$ is meet continuous and quasicontinuous.
\item $P$ is continuous.
\end{enumerate}
\end{thm}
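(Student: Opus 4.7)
The plan is to transport the statement to the complete lattice $\sigma(P)$ of Scott-open sets and apply Theorem~\ref{thm: join cont + hyper cont = cdl}. Section~\ref{sec: prelim} has already assembled the needed translation dictionary: Theorem~\ref{thm: spectra of completely distributive} (together with Raney's identification of prime continuity with complete distributivity) says that $P$ is continuous iff $\sigma(P)$ is prime continuous; Theorem~\ref{thm: scott open of meet cont is join cont} says that $P$ is meet continuous iff $\sigma(P)$ is join continuous; and Theorem~\ref{thm: spectrum of hypercontinuous} says that $P$ is quasicontinuous iff $\sigma(P)$ is hypercontinuous.

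Given this dictionary, the argument is essentially one line in each direction. For $(1) \Rightarrow (2)$, if $P$ is meet continuous and quasicontinuous, then $\sigma(P)$ is both join continuous and hypercontinuous, hence prime continuous by Theorem~\ref{thm: join cont + hyper cont = cdl}, so $P$ is continuous. Conversely, if $P$ is continuous, then $\sigma(P)$ is prime continuous, which by the implications recorded in~(\ref{eq: prime implies join}) and~(\ref{eq: prime implies hyper implies cts}) is both join continuous and hypercontinuous, and translating back gives that $P$ is meet continuous and quasicontinuous.

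There is really no obstacle to overcome here; all the mathematical substance has been absorbed into Lemma~\ref{thm: finite reduce to one} and Theorem~\ref{thm: join cont + hyper cont = cdl}. The one point that deserves a brief comment is the side benefit advertised in the abstract: each of the three translation theorems was stated for arbitrary posets, and Theorem~\ref{thm: join cont + hyper cont = cdl} makes no assumption on $P$ itself beyond what is imposed on the complete lattice $\sigma(P)$. As a consequence, the Kou--Liu--Luo characterization ``meet continuity + quasicontinuity = continuity'' extends automatically from dcpos to all posets, at no extra cost.
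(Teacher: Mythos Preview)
Your proof is correct and follows exactly the same approach as the paper: translate each property of $P$ to the corresponding property of $\sigma(P)$ via Theorems~\ref{thm: spectra of completely distributive}, \ref{thm: scott open of meet cont is join cont}, and~\ref{thm: spectrum of hypercontinuous}, and then invoke Theorem~\ref{thm: join cont + hyper cont = cdl}. The only cosmetic difference is that the paper phrases the argument as a single chain of equivalences rather than splitting it into two directions.
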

\begin{proof}
  By Theorem~\ref{thm: scott open of meet cont is join cont}, $P$ is
  meet continuous if and only if $\sigma(P)$ is join continuous, and
  by Theorem~\ref{thm: spectrum of hypercontinuous}, $P$ is
  quasicontinuous if and only if $\sigma(P)$ is hypercontinuous. Thus,
  by Theorem~\ref{thm: join cont + hyper cont = cdl}, (1) is
  equivalent to $\sigma(P)$ being prime continuous, which, by
  Theorem~\ref{thm: spectra of completely distributive}, is equivalent
  to $P$ being continuous.
\end{proof}

\begin{remark}
  In Kou's original proof of the dcpo version of the above theorem,
  \cite{kouliuluo03}, it was argued that the lattice of Scott-closed
  sets $\Gamma(P)$ is continuous if $P$ is meet continuous. But this
  is not true in general. Construct a non-continuous frame of opens $P
  := \mathcal{O}(X)$ for your favourite non-locally compact space (see
  \cite[p. 417, Theorem V-5.5]{gierzetal03}). Now, were it the case
  that $\Gamma(P)$ is continuous for this choice of~$P$, then by
  Theorem~3.11 of \cite{hozhao09} it would follow that $\Gamma(P)$ is
  prime continuous. This would imply, by Theorem~\ref{thm: spectra of
    completely distributive}, that $P$ is continuous, a contradiction.
\end{remark}

\end{document}